\documentclass[conference]{IEEEtran}
\IEEEoverridecommandlockouts
\usepackage{tikz, tikzpagenodes, dblfloatfix, multirow}
\usetikzlibrary{calc, positioning, shapes, decorations.pathreplacing}

\usepackage{amssymb}
\usepackage{amsfonts}
\usepackage{cite}
\usepackage{amsmath,amssymb,amsfonts,amsthm}
\newtheorem{theorem}{Theorem}
\newtheorem{definition}{Definition}
\usepackage{algorithmic}
\usepackage{graphicx}
\usepackage{textcomp}
\usepackage{xcolor}
\usepackage{booktabs}
\usepackage{multirow}
\usepackage{makecell} % for \makecell
\usepackage{siunitx} % optional, for nice number alignment
\usepackage{enumitem}
\usepackage{subcaption}
\usepackage{hyperref}
\usepackage{listings}
\usepackage{xcolor}

\definecolor{myblue}{RGB}{180, 180, 180}
\definecolor{mylightblue}{RGB}{220, 220, 220}
\definecolor{myred}{RGB}{231, 76, 60}
\definecolor{myotherred}{RGB}{192, 57, 43}
\definecolor{myyellow}{RGB}{135, 206, 250}
\definecolor{myorange}{RGB}{230, 126, 34}
\def\BibTeX{{\rm B\kern-.05em{\sc i\kern-.025em b}\kern-.08em
    T\kern-.1667em\lower.7ex\hbox{E}\kern-.125emX}}

\begin{document}

\title{Wavefront Parallelization for Efficient Learned Image Compression\thanks{This paper is supported by the MIC Project for Efficient Frequency Utilization Toward Wireless IP Multicasting and JST CRONOS, Japan Grant Number JPMJCS25N2.}}

\author{\IEEEauthorblockN{Shimon Murai}
\IEEEauthorblockA{\textit{School of Fundamental} \\
\textit{Science and Engineering} \\
\textit{Waseda University}\\
Tokyo, Japan \\
octachoron@suou.waseda.jp}
% rewrite that fucking shit, it's lame and confusing
\and
\IEEEauthorblockN{Fangzheng Lin}
\IEEEauthorblockA{\textit{School of Engineering} \\
\textit{Institute of Science Tokyo}\\
Tokyo, Japan \\
lin.f.f849@m.isct.ac.jp}
\and
\IEEEauthorblockN{Kasidis Arunruangsirilert}
\IEEEauthorblockA{
\textit{School of Fundamental} \\
\textit{Science and Engineering} \\
\textit{Waseda University}\\
Tokyo, Japan \\
kasidis@katto.comm.waseda.ac.jp}
\and
\IEEEauthorblockN{Jiro Katto}
\IEEEauthorblockA{
\textit{School of Fundamental} \\
\textit{Science and Engineering} \\
\textit{Waseda University}\\
Tokyo, Japan \\
katto@waseda.jp}}
  %\ninept
  %
  \maketitle
  \begin{abstract}
    Autoregressive context models are foundational for learned image compression,
    but they suffer from slow serial inference. Existing acceleration methods such
    as checkerboard context require architectural changes and retraining, thus
    are inapplicable to pre-trained models. We propose a completely training-free inference-time
    acceleration algorithm inspired by wavefront parallelism in video
    coding standards. Our method reorganizes inference into an optimal ``staggered''
    wavefront order, minimizing sequential steps while maintaining exact autoregressive
    dependencies. Experimental results show our approach accelerates pre-trained
    autoregressive models (e.g., Cheng et al.) by more than $13\times$ while preserving
    the original rate-distortion performance. We also demonstrate that faster
    decoding is possible by trading off precise context dependencies. Source code
    will be available at \url{https://github.com/tokkiwa/compressai-wavefront}.
  \end{abstract}
\begin{IEEEkeywords}
    Neural Network, Image Compression, Image Coding, Bayesian Network, Wavefront
    Parallelization
\end{IEEEkeywords}
  \vspace{-1mm}
  \section{Introduction}
  \label{sec:intro} Learned Image Compression (LIC) has surpassed traditional codecs
  in rate-distortion (RD) performance thanks to autoregressive context models~\cite{minnen2018,
  cheng2020, heELICEfficientLearned2022} which effectively capture spatial correlations.
  However, the sequential dependency of such models requires a raster-scan decoding
  order, lacking parallelism as decoding requires a minimum of $O(H \times W)$
  stages~\cite{minnen2018,cheng2020}. To accelerate inference, recent works
  introduce parallelism by breaking the latent-by-latent sequential dependency, such
  as using Checkerboard patterns\cite{heCheckerboardContextModel2021b} or channel-wise autoregression\cite{minnen2020}.
  However, these methods fundamentally alter the probabilistic model, so models not
  only have to be retrained but also typically suffer from degraded RD performance.
    In contrast, spatial context models remain relevant in RD-oriented designs. For example, recent high-RD lattice-quantized codecs still rely on spatial autoregressive entropy modeling~\cite{kudo2023lvqvae,lei2025ltc}, while achieving RD performance competitive with recent codecs based on alternative context models.
     This motivates accelerating pretrained spatial AR context models as a complementary systems contribution. Therefore, achieving the computation speed of parallel decoding without sacrificing the
  high RD performance of autoregressive models remains an open challenge. \looseness=-2

  \begin{figure}[t]
    \setlength{\belowcaptionskip}{-10pt}
    \centering
    \includegraphics[width=\linewidth]{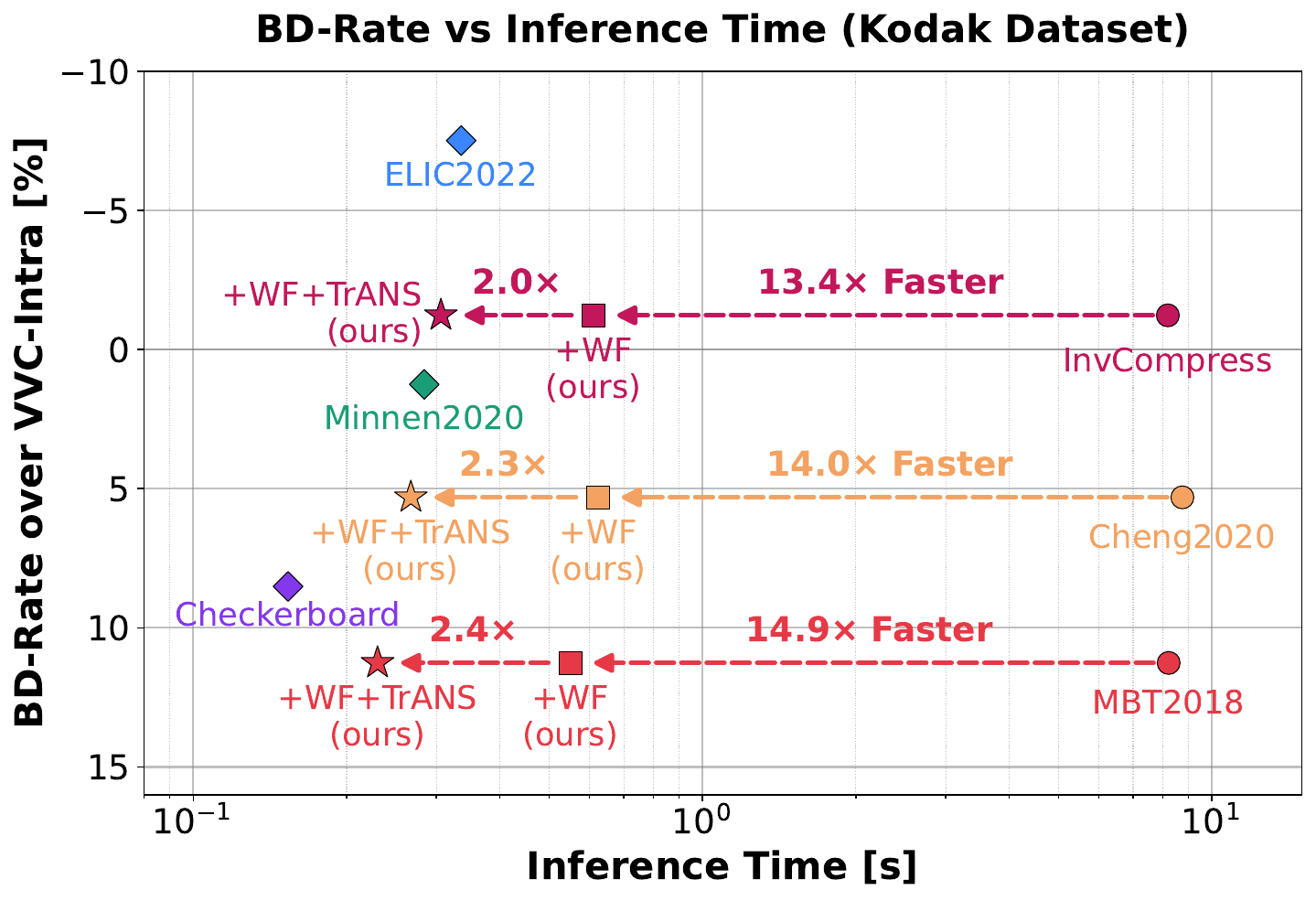}
    \caption{BD-rate versus inference time (compress and decompress) on Kodak
    dataset\cite{kodak}. We apply our Wavefront (``WF'') parallelization to
    three baselines: MBT2018\cite{minnen2018}, Cheng2020\cite{cheng2020} and
    InvCompress\cite{xie_enhanced_inverible_2021}. ``TrANS'' denotes Tensorized
    rANS, an optimized entropy coder with tensorized interface (Sec.~\ref{sec:wavefront_impl}).}
    \label{fig:bdrate_time_kodak}
  \end{figure}

  \begin{figure*}[btp]
\setlength{\belowcaptionskip}{-2pt} 
    \centerline{\tikzset{font={\fontsize{8pt}{12}\selectfont}}

\newcommand{\rulesep}{\unskip\ \vrule\ }

\begin{subfigure}
    [t]{0.23\textwidth}
    \centering
    \begin{tikzpicture}[x=0.4cm, y=0.4cm, step=0.4cm]
        \draw[fill=myblue] (0,5) rectangle (8,8) (0,4) rectangle (3,6);
        \draw[fill=myred] (4,4) rectangle (5,5);
        \fill[myyellow] (2,5) rectangle (7,7) (2,4) rectangle (4,5);
        \draw (0,0) grid (8,8);
        \draw[thick] (0,0) rectangle (8,8);
        \draw[dashed, ultra thick] (2,2) rectangle (7,7);

        \draw[draw=black, line width=0.3mm, -stealth] (5,4.5) -- (5.5,4.5);
    \end{tikzpicture}
    \caption{Autoregressive (AR) \cite{cheng2020}}
    \label{fig:related_work:ar}
\end{subfigure}\hfill
\begin{subfigure}
    [t]{0.23\textwidth}
    \centering
    \begin{tikzpicture}[x=0.4cm, y=0.4cm, step=0.4cm]
        \foreach \y in {0,2,...,6}{ \foreach \x in {0,2,...,6}{ \fill[myblue] (\x,1+\y) rectangle (1+\x,2+\y) (1+\x,\y) rectangle (2+\x,1+\y); \fill[myred] (\x,\y) rectangle (1+\x,1+\y) (1+\x,1+\y) rectangle (2+\x,2+\y); } }

        \fill[myotherred!20] (2,2) rectangle +(5,5);
        \fill[myyellow] (3,2) rectangle (4,3) (5,2) rectangle (6,3);
        \fill[myyellow] (3,4) rectangle (4,5) (5,4) rectangle (6,5);
        \fill[myyellow] (3,6) rectangle (4,7) (5,6) rectangle (6,7);
        \fill[myyellow]
            (2,3) rectangle
            (3,4)
            (4,3) rectangle
            (5,4)
            (6,3) rectangle (7,4);
        \fill[myyellow]
            (2,5) rectangle
            (3,6)
            (4,5) rectangle
            (5,6)
            (6,5) rectangle (7,6);

        \foreach \y in {0,2,...,6}{ \foreach \x in {0,2,...,6}{ \node[white] at (\x + 0.5, \y + 0.5) {1}; \node[white] at (\x + 1.5, \y + 1.5) {1}; \node at (\x + 0.5, \y + 1.5) {0}; \node at (\x + 1.5, \y + 0.5) {0}; } }
        \foreach \y in {2,4,6}{ \foreach \x in {2,4,6}{ \node at (\x + 0.5, \y + 0.5) {1}; } }
        \foreach \y in {3,5}{ \foreach \x in {3,5}{ \node at (\x + 0.5, \y + 0.5) {1}; } }

        \fill[myred] (4, 4) rectangle (5,5);
        \node[white] at (4.5, 4.5) {1};

        \draw (0,0) grid (8,8);
        \draw[dashed, ultra thick] (2,2) rectangle (7,7);
        \draw[thick] (0,0) rectangle (8,8);
    \end{tikzpicture}
    \caption{Checkerboard \cite{heCheckerboardContextModel2021b}}
    \label{fig:related_work:checkerboard}
\end{subfigure}\hfill
\begin{subfigure}
    [t]{0.23\textwidth}
    \centering
    \begin{tikzpicture}[x=0.4cm, y=0.4cm, step=0.4cm]
        % Wavefront lambda=2 (for K=3 kernel)
        % t = 2*(7-y) + x
        % Snapshot at t=10
        % Focus pixel: x=4, y=4 (which has t = 2*(7-4)+4 = 10)

        % 1. Background colors (Blue/Pink/Red/White)
        % Pink = same wavefront, Red = focus pixel only
        \foreach \y in {0,...,7}{ \foreach \x in {0,...,7}{ \pgfmathtruncatemacro{\w}{2*(7-\y)+\x} \ifnum \w < 10 \fill[myblue] (\x,\y) rectangle ++(1,1); \fi \ifnum \w = 10 \fill[myred] (\x,\y) rectangle ++(1,1); \fi } }
        \fill[myred] (4,4) rectangle ++(1,1);
        \fill[myyellow] (3,5) rectangle (6,6);
        \fill[myyellow] (3,4) rectangle (4,5);

        % 3. Grid and numbers
        \draw (0,0) grid (8,8);
        \foreach \y in {0,...,7}{ \foreach \x in {0,...,7}{ \pgfmathtruncatemacro{\w}{2*(7-\y)+\x} \ifnum \w = 10 \node[white] at (\x+0.5, \y+0.5) {\w}; \else \node[black] at (\x+0.5, \y+0.5) {\w}; \fi } }

        % 4. Dashed Kernel Box (3x3)
        \draw[dashed, ultra thick] (3,3) rectangle (6,6);
        \draw[thick] (0,0) rectangle (8,8);
    \end{tikzpicture}
    \caption{Wavefront ($\lambda=2$)}
    \label{fig:related_work:wf2}
\end{subfigure}\hfill
\begin{subfigure}
    [t]{0.23\textwidth}
    \centering
    \begin{tikzpicture}[x=0.4cm, y=0.4cm, step=0.4cm]
        % Wavefront lambda=3 (for K=5 kernel)
        % t = 3*(7-y) + x
        % Snapshot at t=15
        % Focus pixel: x=3, y=3 (which has t=15)

        % 1. Background colors (Blue/Pink/Red/White)
        % Pink = same wavefront, Red = focus pixel only
        \foreach \y in {0,...,7}{ \foreach \x in {0,...,7}{ \pgfmathtruncatemacro{\w}{3*(7-\y)+\x} \ifnum \w < 15 \fill[myblue] (\x,\y) rectangle ++(1,1); \fi \ifnum \w = 15 \fill[myred] (\x,\y) rectangle ++(1,1); \fi } }
        \fill[myred] (3,3) rectangle ++(1,1);
        \fill[myyellow] (1,4) rectangle (6,6);
        \fill[myyellow] (1,3) rectangle (3,4);

        % 3. Grid and numbers
        \draw (0,0) grid (8,8);
        \foreach \y in {0,...,7}{ \foreach \x in {0,...,7}{ \pgfmathtruncatemacro{\w}{3*(7-\y)+\x} \ifnum \w = 15 \node[white] at (\x+0.5, \y+0.5) {\w}; \else \node[black] at (\x+0.5, \y+0.5) {\w}; \fi } }

        % 4. Dashed Kernel Box (5x5)
        \draw[dashed, ultra thick] (1,1) rectangle (6,6);
        \draw[thick] (0,0) rectangle (8,8);
    \end{tikzpicture}
    \caption{Wavefront ($\lambda=3$)}
    \label{fig:related_work:wf3}
\end{subfigure}}
    \centerline{\tikzset{font={\fontsize{9pt}{12}\selectfont}}

\begin{tikzpicture}[x=0.4cm, y=0.4cm, step=0.4cm]
    \draw[fill=white] (3,0) rectangle +(1,1) node (white0) {};
    \node[right=0 of white0.east, anchor=north west] {Not yet decoded};

    \draw[fill=myblue] (12,0) rectangle +(1,1) node (blue) {};
    \draw[fill=myyellow] (13.5,0) rectangle +(1,1) node (yellow0) {};
    \node[right=0 of yellow0.east, anchor=north west] {Decoded latent codes};

    \draw[fill=myred] (23,0) rectangle +(1,1) node (red) {};
    \draw[fill=myotherred!20] (24.5,0) rectangle +(1,1) node (otherred0) {};
    \node[right=0 of otherred0.east, anchor=north west] {Currently decoding};

    \draw[dashed, ultra thick] (3, -1.5) rectangle +(1,1) node (dashed) {};
    \node[right=0 of dashed.east, anchor=north west] {Context of conv kernel};

    \draw[fill=myyellow] (13.5,-1.5) rectangle +(1,1) node (yellow1) {};
    \draw[dashed, ultra thick] (13.5,-1.5) rectangle +(1,1);
    \node[right=0 of yellow1.east, anchor=north west] {is available as context};

    \draw[fill=white] (23,-1.5) rectangle +(1,1) node (white1) {};
    \draw[fill=myotherred!20] (24.5,-1.5) rectangle +(1,1) node (otherred1) {};
    \draw[dashed, ultra thick] (23,-1.5) rectangle +(1,1);
    \draw[dashed, ultra thick] (24.5,-1.5) rectangle +(1,1);
    \node[right=0 of otherred1.east, anchor=north west] {are not available};

    %\draw[dashed] (2.75, -0.25) rectangle (32.5, 1.25);
    \draw[dashed] (2.75, -1.75) rectangle (32.5, -0.25);

    \draw (33,-0.5) rectangle +(1,1);
    \draw (34.5,-0.5) rectangle +(1,1);
    \node at (33.5,0) {0};
    \node at (35,0) {9};
    \node[anchor=west] at (35.7,0) {Decoding steps (pixels with the};
    \node[anchor=north west, inner sep=0pt]
        at
        (33,-0.65)
        {same number can be processed in parallel)};
\end{tikzpicture}}
    \caption{Illustration of different context models; \subref{fig:related_work:wf2}
    \subref{fig:related_work:wf3} are examples of proposed staggered wavefront processing.}
    \label{f:related_work}
  \end{figure*}
  In this paper, we revisit the context model as a Bayesian network defined on a
  directed acyclic graph. A crucial observation is that the dependency graph
  induced by spatial autoregression is sparse: a pixel depends only on a small local
  neighborhood (e.g., a $5 \times 5$ window) rather than the entire history of decoded
  pixels, implying that the standard raster-scan order is not necessarily
  optimal scheduling.
  % While wavefront parallelization has been successfully applied in video coding standards
  % (e.g., HEVC/VVC's Wavefront Parallel Processing), the pixel-wise dependencies
  % of masked convolutions in LIC are fundamentally different from block-wise
  % dependencies in video codecs, requiring rigorous mathematical analysis to derive
  % the optimal schedule.
  This leads to our proposed \textbf{staggered wavefront} strategy, which maximizes
  parallelism while strictly respecting causality. We adopt Lamport's theorem\cite{lamport1974parallel}
  from scheduling theory to prove that this order is a valid and optimal
  scheduling under certain restrictions. Moreover, we propose the \textbf{context
  approximation} mode to further increase parallelism by processing multiple wavefronts
  jointly and approximating unavailable context values. This mode allows a
  single trained model to operate at multiple speed (parallelism) settings, trading
  off bitrate against computation time. In contrast to prior acceleration
  methods, our proposal is \textbf{completely training-free}; it is a drop-in replacement
  for inference on pre-trained models to enable parallel execution. Our main
  contributions are:
  \begin{itemize}
    \item We redefine the autoregressive model as a directed acyclic graph and derive
      an optimal parallel scheduling using Lamport’s theorem, and propose a
      training-free parallelization method that achieves $\approx 13$--$32\times$
      speedup.% depending on image resolution.

    \item We demonstrate that by approximating spatial context, a single trained
      model can operate at levels of parallelism, providing a tradeoff between bitrate
      and computation time.

    \item We provide a GPU-friendly batched implementation of the proposed scheduling
      method and discuss its design and performance, achieving an additional
      acceleration of $\approx$ 2--3$\times$.

    \item We open-source our code to facilitate reproducibility.
  \end{itemize}

  \section{Background}

  \subsection{Learned Image Compression}
  Modern learned image compression frameworks typically follow  the structure of a variational
  autoencoder (VAE)\cite{kingma2014autoencoding}. An input image $x$ is transformed
  by an analysis transform $g_{a}$ into a latent representation $y = g_{a}(x)$.
  These latents are then quantized to yield discrete symbols $\hat{y}= Q(y)$, which
  are losslessly compressed into a bitstream with an entropy coder. During
  decoding, the bitstream is decompressed to recover $\hat{y}$, which is then
  passed through a synthesis transform $g_{s}$ to reconstruct the image
  $\hat{x}= g_{s}(\hat{y})$.

  The goal of the network is to minimize a rate-distortion objective function
  $\mathcal{L}= R + \lambda D$, where $D$ measures the distortion between $x$ and
  $\hat{x}$ (e.g., MSE or MS-SSIM), and $R$ represents the bitrate estimated
  with cross entropy. Therefore, accurately estimating the probability
  distribution of the quantized latents is crucial for minimizing $R$ and
  achieving high compression efficiency. \looseness=-2

  \subsection{Autoregressive Context Modeling}
  To minimize the bitrate, the entropy model must capture the statistical dependencies
  among latent symbols. Early methods employed a factorized prior\cite{balle2017},
  which assumes latent variables are independent and thus not as performant. The
  hyperprior model\cite{ballé2018variational} addresses this by utilizing side
  information $\hat{z}$ to predict the distribution parameters of $\hat{y}$.

  To further improve compression, recent methods incorporate autoregressive context
  models, which can be broadly categorized into two types: channel-wise autoregression\cite{minnen2020},
  capturing dependencies across channels; and spatial autoregression\cite{minnen2018},
  capturing dependencies among neighboring latent pixels within the same spatial
  plane. The general autoregressive model can be expressed as:
  \begin{equation}
    p(\hat{y}| \hat{z}) = \prod_{i}p(\hat{y}_{i}| \text{ctx}(\hat{y}_{i}), \hat{z}
    )
  \end{equation}
  While dense spatial autoregression shown in Fig.~\ref{f:related_work}\subref{fig:related_work:ar}
  offers powerful probability modeling capabilities, it also introduces a severe
  bottleneck: the decoding of a latent pixel strictly depends on its causal neighbors,
  and thus enforces a sequential raster-scan execution, leading to high latency.

  To mitigate this computational cost, recent performance-optimized models adopt
  simplified architectures that aim to reduce the number of serial stages. For example,
  Checkerboard models\cite{heCheckerboardContextModel2021b} reorganize spatial dependencies
  into a two-pass parallel decoding scheduling (Fig.~\ref{f:related_work}\subref{fig:related_work:checkerboard}).
  Although these approaches effectively accelerate inference, in the end they
  are architectural approximations of the probability model. Adopting these methods
  also requires full re-training of the LIC model. These reasons combined are
  why standard spatial autoregressive models\cite{minnen2018, cheng2020,
  xie_enhanced_inverible_2021} still remain attractive: they offer dense
  correlation modeling, and we argue that their practicality would highly improve
  when the aforementioned serial bottleneck is resolved without relying on architectural
  simplification.\looseness=-2

  \subsection{Wavefront Parallel Processing in Video Coding}
  Wavefront Parallel Processing (WPP) is a standardized methodology in HEVC and VVC
  that enables row-level parallelism during CTU decoding \cite{sze2014hevc,HEVCComplecity}.
  Each CTU row begins processing once a small number of CTUs (typically one or two)
  in the row above have been completed, creating a diagonal wavefront pattern.
  However, applying WPP directly to learned image compression is non-trivial.
  Video codecs operate at block granularity with localized inter-block
  dependencies, whereas spatial autoregressive learned image compression models operate
  with dense and complex receptive fields induced by masked convolutions. To the
  best of our knowledge, no prior work has formally derived the optimal wavefront
  schedule for pixel-wise autoregressive context models. \looseness=-1
  \section{Proposed Method}

  \subsection{Problem Formulation}
  Let $Y \in \mathbb{R}^{H \times W \times C}$ be the latent feature map to be
  compressed. In spatial autoregressive entropy models, the probability
  distribution of a latent pixel $y_{i,j}$ (at row $i$, column $j$) is conditioned
  on the set of previously decoded pixels, denoted as the context
  $\mathcal{C}_{i,j}$.
  \begin{equation}
    p(Y) = \prod_{i=0}^{H-1}\prod_{j=0}^{W-1}p(y_{i,j}\mid \mathcal{C}_{i,j})
  \end{equation}
  Typical implementations define $\mathcal{C}_{i,j}$ using a masked convolution with
  a kernel size $K \times K$. The causal context restricts dependencies to
  pixels that are spatially prior to $(i, j)$, leading to the decoding in raster-scan
  order. We show that better scheduling that fully preserves context exists
  while also allowing parallelism.

  We formally redefine this structure as a \textbf{Bayesian network} on a directed
  acyclic graph (DAG) $G=(V, E)$. The vertex set $V$ represents the spatial
  locations $\{(i,j)\}$, and the edge set $E$ represents the dependencies defined
  by the kernel:
  \begin{equation}
    E = \{ ((i', j') \to (i, j)) \mid (i-i', j-j') \in \mathcal{K}\}
  \end{equation}
  where $\mathcal{K}$ is the set of relative coordinates with non-zero weights
  in the causal kernel.

  For the decoding process to be valid and preserve all causal contexts, the execution
  order must be a \textbf{topological sort} of $G$. The standard raster-scan order,
  defined by the index $k = i \cdot W + j$, satisfies this condition. While
  sufficient for correctness, it imposes a strict total ordering that prohibits parallel
  execution. However, the graph $G$ is sparse; a pixel only depends on a local
  neighborhood. This implies that multiple nodes may share the same topological rank.
  Our goal is to find an optimal topological sorting function $\tau(i, j) \to t$
  that minimizes the total number of steps $T_{\max}= \max_{(i,j)}\tau(i,j)$
  while satisfying the causality constraint:
  \begin{equation}
    \forall ((i', j') \to (i, j)) \in E, \quad \tau(i', j') < \tau(i, j)
  \end{equation}
  Nodes with the same rank $t$ form a \textit{Wavefront} set
  $\mathcal{W}_{t}= \{ (i,j) \mid \tau(i,j) = t \}$ and are mutually d-separated,
  given $\bigcup_{k<t}\mathcal{W}_{k}$. Thus, all pixels in $\mathcal{W}_{t}$
  can be decoded in parallel.

  \subsection{Optimal Sorting via Hyperplane Method}
  \label{sec:optimal_sorting} To derive the optimal schedule $\tau$, we formulate
  this as a loop scheduling problem on a 2D grid. We employ \textbf{Lamport's
  Hyperplane Method} \cite{lamport1974parallel}, which seeks a linear schedule
  vector $\boldsymbol{\pi}= (\lambda , 1)$ such that the time step is given by:
  \begin{equation}
    \tau(i, j) = \boldsymbol{\pi}\cdot [i, j]^{T}= \lambda i + j
  \end{equation}
  Here, $\lambda$ is a parameter called shear factor. According to Lamport's theorem,
  this schedule is valid if and only if:
  \begin{equation}
    \boldsymbol{\pi}\cdot \mathbf{d}> 0, \quad \forall \mathbf{d}\in \mathcal{D}\label{eq:lamport_condition}
  \end{equation}
  where $\mathcal{D}$ is the set of dependence vectors $\mathbf{d}= (d_{i}, d_{j}
  )$ corresponding to the relative positions of the causal neighbors.

  For a standard $5 \times 5$ masked convolution, the dependency set includes
  vectors from the previous rows. The most critical constraint comes from the neighbor
  in the upper right in the receptive field, corresponding to the vector
  $\mathbf{d}_{\text{crit}}= (1, -2)$ (i.e., row $i-1$, column $j+2$).
  Substituting $\mathbf{d}_{\text{crit}}$ into Eq. \ref{eq:lamport_condition}:
  \begin{equation}
    \lambda \cdot 1 + 1 \cdot (-2) > 0 \implies \lambda > 2
  \end{equation}
  Since $\tau$ must assign integer time steps to grid points, $\lambda$ must also
  be an integer. Thus, the minimum valid shear factor is \textbf{$\lambda = 3$}. The
  number of execution steps of $\tau$ equals the length of the longest path of the
  DAG, implying its optimality among other schedules.

  More generality, for any
  $(2R + 1) \times (2R +1)$ masked convolution, the wavefront schedule with shear
  factor $\lambda = R + 1$ is valid and optimal. We depict this schedule, a \textbf{staggered wavefront}, in Fig.
  \ref{fig:related_work:wf2} and Fig. \ref{fig:related_work:wf3}. The standard
  wavefront used in HEVC/VVC (as detailed in Sec. 3.3.2. of \cite{sze2014hevc})
  %\cite{sze2014hevc}
  often corresponds to $\lambda=1$ (block-wise causality) or $\lambda=2$ in this
  staggered wavefront framework.

  \begin{figure}[t]
    \setlength{\belowcaptionskip}{-5pt}
    \centering
    \tikzset{font={\fontsize{8pt}{12}\selectfont}}

% Grouped decoding with context approximation (1-column figure)
% Left: Standard Wavefront (group_size=1)
% Right: Grouped Wavefront (group_size=2) with mean-fill approximation

\begin{subfigure}
    [t]{0.48\columnwidth}
    \centering
    \begin{tikzpicture}[x=0.4cm, y=0.4cm, step=0.4cm]
        % Standard Wavefront lambda=2, t=10 (single diagonal)
        % Focus pixel: (4,4)

        % 1. Background colors
        \foreach \y in {0,...,7}{ \foreach \x in {0,...,7}{ \pgfmathtruncatemacro{\w}{2*(7-\y)+\x} \ifnum \w < 10 \fill[myblue] (\x,\y) rectangle ++(1,1); \fi \ifnum \w = 10 \fill[myred] (\x,\y) rectangle ++(1,1); \fi } }
        % Focus pixel (4,4) is red
        \fill[myred] (4,4) rectangle ++(1,1);

        % 2. Yellow Context for pixel (4,4) with 3x3 kernel
        \fill[myyellow] (3,5) rectangle (6,6);
        \fill[myyellow] (3,4) rectangle (4,5);

        % 3. Grid and numbers
        \draw (0,0) grid (8,8);
        \foreach \y in {0,...,7}{ \foreach \x in {0,...,7}{ \pgfmathtruncatemacro{\w}{2*(7-\y)+\x} \ifnum \w = 10 \node[white] at (\x+0.5, \y+0.5) {\w}; \else \node[black] at (\x+0.5, \y+0.5) {\w}; \fi } }

        % 4. Dashed Kernel Box (3x3)
        \draw[dashed, ultra thick] (3,3) rectangle (6,6);
        \draw[thick] (0,0) rectangle (8,8);
    \end{tikzpicture}
    \caption{Standard ($N{=}1$)}
    \label{fig:grouping:standard}
\end{subfigure}\hfill
\begin{subfigure}
    [t]{0.48\columnwidth}
    \centering
    \begin{tikzpicture}[x=0.4cm, y=0.4cm, step=0.4cm]
        % Grouped Wavefront lambda=2, group_size=2
        % Processing group 5 (t=10,11) simultaneously
        % Focus pixel: (5,4) with t=11, group=5

        % 1. Background colors: Blue (decoded groups 0-4), Red (current group 5)
        % t = 2*(7-y)+x, group = floor(t/2)
        \foreach \y in {0,...,7}{ \foreach \x in {0,...,7}{ \pgfmathtruncatemacro{\w}{2*(7-\y)+\x} \pgfmathtruncatemacro{\grp}{int(\w/2)} \ifnum \grp < 5 \fill[myblue] (\x,\y) rectangle ++(1,1); \fi \ifnum \grp = 5 \fill[myred] (\x,\y) rectangle ++(1,1); \fi } }

        % Focus pixel (5,4) stays red
        \fill[myred] (5,4) rectangle ++(1,1);

        % 2. Context for pixel (5,4)
        % Yellow = exact context (already decoded, group < 5)
        % Pink = same group, unavailable context (mean-filled)
        % For 3x3 kernel around (5,4):
        % (4,5): t=8, g=4 -> decoded (yellow)
        % (5,5): t=9, g=4 -> decoded (yellow)
        % (6,5): t=10, g=5 -> SAME GROUP! (pink, approximated)
        % (4,4): t=10, g=5 -> SAME GROUP! (pink, approximated)
        % (6,4): t=12, g=6 -> future, not in kernel

        \fill[myyellow] (4,5) rectangle (6,6);
        \fill[myyellow] (4,4) rectangle (5,5);

        % Pink squares for same-group approximated context
        \fill[myred!30] (6,5) rectangle (7,6);
        \fill[myred!30] (4,4) rectangle (5,5);

        % 3. Grid and numbers (group numbers instead of t)
        \draw (0,0) grid (8,8);
        \foreach \y in {0,...,7}{ \foreach \x in {0,...,7}{ \pgfmathtruncatemacro{\w}{2*(7-\y)+\x} \pgfmathtruncatemacro{\grp}{int(\w/2)} \ifnum \grp = 5 \ifnum \x = 4 \ifnum \y = 4 \node[black] at (\x+0.5, \y+0.5) {\grp}; \else \node[white] at (\x+0.5, \y+0.5) {\grp}; \fi \else \ifnum \x = 6 \ifnum \y = 5 \node[black] at (\x+0.5, \y+0.5) {\grp}; \else \node[white] at (\x+0.5, \y+0.5) {\grp}; \fi \else \node[white] at (\x+0.5, \y+0.5) {\grp}; \fi \fi \else \node[black] at (\x+0.5, \y+0.5) {\grp}; \fi } }

        % 4. Dashed Kernel Box (3x3) - shifted right by 1
        \draw[dashed, ultra thick] (4,3) rectangle (7,6);
        \draw[thick] (0,0) rectangle (8,8);
    \end{tikzpicture}
    \caption{Grouped ($N{=}2$)}
    \label{fig:grouping:grouped}
\end{subfigure}
    \caption{Illustration of Context Approximation mode. This mode processes multiple
    diagonals simultaneously, using mean-fill approximation for unavailable
    context pixels.}
    \label{fig:grouping}
  \end{figure}
  % --- in preamble (recommended) ---

% \sisetup{detect-weight=true, detect-family=true}

% --- table (mock) ---
\begin{table*}
    [t]
    \centering
    \caption{Runtime and rate--distortion impact. Measured time includes whole
    network and entropy coding. Speedup is computed as a summation of encode--decode
    with regards to its autoregressive implementation in
    \cite{begaintCompressAIPyTorchLibrary2020}. BD-rate\cite{bjontegaard} is
    calculated with VVC-Intra (VTM 23.0) as the reference. $^{*}$Pre-trained
    weights or evaluated results were not publicly available from the authors.}
    \vspace{-2mm}
    \label{tab:runtime_bd}
    \setlength{\tabcolsep}{3.2pt}
    \renewcommand{\arraystretch}{1.12}

    % If you prefer adjustbox:
    % \begin{adjustbox}{max width=\textwidth}
    \resizebox{\textwidth}{!}{%
    \begin{tabular}{l|cccc|cccc|cccc}
        \toprule \multirow{2}{*}{Model}                          & \multicolumn{4}{c|}{Kodak} & \multicolumn{4}{c|}{Tecnick} & \multicolumn{4}{c}{CLIC2020-valid} \\
        \cmidrule(lr){2-5}\cmidrule(lr){6-9}\cmidrule(lr){10-13} & \makecell{Enc\\(s/img)}    & \makecell{Dec\\(s/img)}      & \makecell{Speedup\\(times)}       & \makecell{BD-rate\\(\%)} & \makecell{Enc\\(s/img)} & \makecell{Dec\\(s/img)} & \makecell{Speedup \\ (times)} & \makecell{BD-rate\\(\%)} & \makecell{Enc\\(s/img)} & \makecell{Dec\\(s/img)} & \makecell{Speedup\\(times)} & \makecell{BD-rate\\(\%)} \\
        \midrule                                                  % ---------------- Cheng2020 block ----------------
        Cheng2020 \cite{cheng2020}                               & 2.811                      & 5.937                        & 1.00                              & 5.31                     & 9.344                   & 21.219                  & 1.00                          & 6.92                     & 15.512                  & 34.073                  & 1.00                        & -2.33                    \\
        Cheng2020 + WF (ours)                                    & 0.1606                     & 0.4629                       & 14.0                              & 5.31                     & 0.2811                  & 0.9202                  & 25.4                          & 6.92                     & 0.3976                  & 1.1152                  & 32.8                        & -2.33                    \\
        Cheng2020 + WF + TrANS (ours)                            & 0.1285                     & 0.1390                       & 32.7                              & 5.31                     & 0.3004                  & 0.3260                  & 48.8                          & 6.92                     & 0.4177                  & 0.4539                  & 56.9                        & -2.33                    \\
        \midrule                                                  % ---------------- MBT2018 block ----------------
        MBT2018 \cite{minnen2018}                                & 2.5491                     & 5.6723                       & 1.00                              & 11.26                    & 9.256                   & 21.045                  & 1.00                          & 12.69                    & 14.944                  & 33.409                  & 1.00                        & 0.36                     \\
        MBT2018 + WF (ours)                                      & 0.1223                     & 0.4280                       & 14.9                              & 11.26                    & 0.2915                  & 0.9232                  & 24.9                          & 12.69                    & 0.3898                  & 1.1108                  & 32.2                        & 0.36                     \\
        MBT2018 + WF + TrANS (ours)                              & 0.1104                     & 0.1197                       & 35.7                              & 11.26                    & 0.2478                  & 0.2527                  & 60.5                          & 12.69                    & 0.3432                  & 0.3297                  & 71.9                        & 0.36                     \\
        \midrule                                                  % ---------------- InvCompress block ----------------
        InvCompress \cite{xie_enhanced_inverible_2021}           & 2.5268                     & 5.6637                       & 1.00                              & -1.24                    & 9.418                   & 21.208                  & 1.00                          & -0.99                    & 16.519                  & 35.047                  & 1.00                        & -9.25                    \\
        InvCompress + WF (ours)                                  & 0.1504                     & 0.4607                       & 13.4                              & -1.24                    & 0.3730                  & 1.0064                  & 22.2                          & -0.99                    & 0.5848                  & 1.2829                  & 27.6                        & -9.25                    \\
        InvCompress + WF + TrANS (ours)                          & 0.1457                     & 0.1607                       & 26.7                              & -1.24                    & 0.3536                  & 0.3807                  & 41.7                          & -0.99                    & 0.5046                  & 0.4953                  & 51.6                        & -9.25                    \\
        \midrule                                                  % ---------------- Non-AR / reference models ----------------
        Checkerboard\cite{heCheckerboardContextModel2021b}       & 0.0878                     & 0.0656                       & --                                & 8.51                     & 0.2575                  & 0.2002                  & --                            & -- $^{*}$                & 0.4036                  & 0.3085                  & --                          & -- $^{*}$                \\
        ELIC\cite{heELICEfficientLearned2022}                    & 0.2069                     & 0.1288                       & --                                & -7.19                    & 0.4440                  & 0.2843                  & --                            & -7.75                    & 0.6574                  & 0.4091                  & --                          & -12.88                   \\
        Minnen2020\cite{minnen2020}                              & 0.1773                     & 0.1069                       & --                                & 1.30                     & 0.3813                  & 0.2368                  & --                            & -1.63                    & 0.5469                  & 0.3403                  & --                          & -- $^{*}$                \\
        \bottomrule
    \end{tabular}%
    }
    % \end{adjustbox}
    \vspace{-5mm}
\end{table*}
  \subsection{Wavefront Implementation}
  \label{sec:wavefront_impl}

  To fully utilize GPU parallelism, we develop progressive optimizations to achieve
  practical speedup. Each optimization addresses a specific bottleneck
  identified in the previous implementation.

  \textit{Baseline (per-pixel loop).} A direct implementation iterates over
  pixels in wavefront order $\tau(i,j)=\lambda i + j$. For each diagonal $d = \lambda
  i + j$, we enumerate all valid $(i, j)$ pairs satisfying the constraint and
  process them sequentially.
  % While this schedule minimizes the number of
  % steps to $O(\lambda H + W)$ compared to raster-scan's $O(HW)$, per-pixel kernel
  % launches and Python loop overhead dominate runtime.

  \textit{Batched wavefront.} Instead of searching the symbols in the wavefront each
  time, we can first group all pixels with the same $\tau$ value into a
  wavefront $S_{s}$ and evaluate their context models in a single batched call. For
  a wavefront containing $n$ pixels at positions $(h_{1},w_{1}), \ldots , (h_{n},
  w_{n})$, we extract their $k \times k$ neighborhood patches from the padded
  latent tensor $\hat{\mathbf{Y}}$ and stack them into
  $\mathbf{P}_{s}\in \mathbb{R}^{n \times C \times k \times k}$. The masked convolution
  is then applied once:
  $\mathbf{C}_{s}= \text{Conv2D}(\mathbf{P}_{s}, \mathbf{W}\odot \mathbf{M})$, producing
  context features for all $n$ pixels simultaneously.

  \textit{Selective im2col.} Naive patch extraction using unfolding materializes
  all $HW$ sliding windows into a tensor of size $C \times k^{2}\times HW$ from
  which we then index the wavefront pixels. This full im2col is wasteful when $|S
  _{s}| \ll HW$. Instead, for each wavefront, we compute the flat indices of
  only the required $k \times k$ neighborhoods, then gather patches using PyTorch's
  advanced indexing. This selective approach avoids materializing the full
  tensor and reduces memory overhead.

  \textit{Tensorized rANS.} Once wavefront overhead is reduced, entropy coding
  becomes the bottleneck. Recent learned compression models adopt range variant of Asymmetric Neural Systems (rANS)\cite{duda2009asymmetricnumeralsystems}, and standard rANS interfaces (as in \cite{begaintCompressAIPyTorchLibrary2020}) accept Python lists,
  requiring data conversion that triggers device-to-host synchronization and
  incurs per-symbol Python overhead. We implemented a C++ interface that performs rANS encoding/decoding
  by directly pointing elements in PyTorch Tensors without explicit data conversion.

  \subsection{Context Approximation for Further Speedup}
  \label{sec:context_approx} To further improve throughput, we introduce an
  optional \emph{context approximation} mode that increases parallelism by relaxing
  strict autoregressive dependencies. As illustrated in Fig.~\ref{fig:grouping},
  we process $N$ consecutive wavefronts \emph{as a group}. Within a group, some pixels
  would normally depend on other pixels that belong to later wavefronts in the same
  group; thus, strict AR causality is not fully satisfied. Our idea is to
  approximate the unavailable context values deterministically so that both
  encoder and decoder follow the same computation and remain synchronized.

  Before computing the context for a grouped wavefront, we temporarily assign provisional
  values to the yet-unavailable pixels. Specifically, for each pixel we fill missing
  context entries by the mean of already-known reconstructed latents in its local
  neighborhood; if no known samples exist, we use a fixed constant (zero in our
  experiments). This produces an approximate context tensor that allows the
  model to evaluate $(\mu,\sigma)$ for all pixels in the grouped wavefront in a
  single batched pass. Note that this approximation results in a suboptimal probability
  modeling and higher bitrate; however, preserves image quality as the transform
  networks are identical.

  % The group size $N$ provides a simple knob to trade off runtime and compression
  % efficiency. Larger $N$ reduces the number of wavefront iterations and correspondingly
  % reduces the number of context-convolution invocations, leading to higher
  % throughput. However, because the predicted distributions are computed from
  % approximate contexts, entropy estimates become less accurate and the bitrate increases,
  % resulting in a BD-rate penalty. In Sec.~IV, we quantify this trade-off and show
  % that a single trained model can operate at multiple speed points without retraining.
  \section{Experiment}
  \label{sec:experiment}
  \subsection{Experiment Setup}
  The software environment is Python 3.10, Pytorch 2.9.0, CUDA 11.8. We used three datasets to evaluate the codecs including Kodak\cite{kodak},
  Tecnick\cite{asuni2014testimages} and CLIC2020-Valid\cite{CLIC2020}. The Kodak
  dataset contains 24 images with a resolution of $768\times512$. The Tecnick
  dataset contains 100 images with a resolution of $1200\times1200$. The CLIC2020
  validation dataset contains 41 images with various resolutions of which 65\%
  are over Full-HD resolution. Unless otherwise mentioned, the inference time is measured on a single NVIDIA
  GeForce RTX 3080 12 GB GPU and an Intel Core i7-8700K CPU. Batch size is set
  to one. We adopt the selective im2col and tensorized
  rANS as a standard baseline.

  \subsection{Implementation}
  We implement the proposed wavefront parallelization method in PyTorch. For Cheng2020\cite{cheng2020}
  and MBT2018\cite{minnen2018}, standard pre-trained weights from CompressAI\cite{begaintCompressAIPyTorchLibrary2020}
  are used. For InvCompress\cite{xie_enhanced_inverible_2021}, official pre-trained
  weights are used. For other methods, we used CompressAI implementations or re-implemented
  them based on the official papers. %The source code of re-implemented models will be publicly available.

  % --- in preamble (recommended) ---
% \usepackage{booktabs}
% \usepackage{makecell}

\begin{table}[t]
    \centering
    \caption{Speed--RD trade-off of context approximation for Cheng2020 (quality
    = 3) on Kodak. Speedups were measured relatively to the raster baseline.
    $N$ denotes the group size.}
    \label{tab:approx_kodak_cheng2020}
    \setlength{\tabcolsep}{6pt}
    \renewcommand{\arraystretch}{1.15}

    % \begin{tabular}{lcc}
    %     \toprule Method              & \makecell{PSNR\\(dB)} & \makecell{Speedup\\(Enc/Dec)} & \makecell{\\(\%)} \\
    %     \midrule Raster Baseline     & 1.00/1.00             & 0.00                           \\
    %     \midrule Staggered Wavefront & 21.87/42.71           & 0.00                           \\
    %     Approx. Group=2              & 31.71/59.37           & +9.33                          \\
    %     Approx. Group=3              & 36.96/73.04           & +14.16                         \\
    %     Approx. Group=4              & 52.05/103.79          & +27.00                         \\
    %     \bottomrule
    % \end{tabular}
    \begin{tabular}{lrrrr}
        \toprule Variant & \makecell{PSNR\\(dB)} & \makecell{BPP\\} & \makecell{$\Delta$ BPP \\ (\%)} & \makecell{Speedup \\ (Enc/Dec)} \\
        \midrule Raster  & 31.317                & 0.269            & +0.00                           & 1.00/1.00                       \\
        Wavefront        & 31.327                & 0.269            & +0.00                           & 21.87/42.71                     \\
        $N = 2$ approx.  & 31.331                & 0.295            & +9.67                           & 31.71/59.37                     \\
        $N = 3$ approx.  & 31.338                & 0.309            & +14.9                           & 36.96/73.04                     \\
        $N = 4$ approx.  & 31.346                & 0.344            & +27.9                           & 52.1/103.8                      \\
        \bottomrule
    \end{tabular}
    \vspace{-5mm}
\end{table}
  % --- table combining env2 and env3 ---
\begin{table*}
    [t]
    \centering
    \caption{Runtime comparison across two additional experimental settings. Measured time includes
    whole network and entropy coding/decoding.}
    \label{tab:runtime_combined}
    \setlength{\tabcolsep}{3.0pt}
    \renewcommand{\arraystretch}{1.12}

    \resizebox{\textwidth}{!}{%
    \begin{tabular}{l|cc|cc|cc||cc|cc|cc}
        \toprule \multirow{3}{*}{Model}                                                                                  & \multicolumn{6}{c||}{Setting 1 (RTX 2080 GPU)} & \multicolumn{6}{c}{Setting 2 (RTX 5000 PRO Blackwell GPU)} \\
        \cmidrule(lr){2-7}\cmidrule(lr){8-13}                                                                            & \multicolumn{2}{c|}{Kodak}                     & \multicolumn{2}{c|}{Tecnick}                              & \multicolumn{2}{c||}{CLIC2020-valid} & \multicolumn{2}{c|}{Kodak} & \multicolumn{2}{c|}{Tecnick} & \multicolumn{2}{c}{CLIC2020-valid} \\
        \cmidrule(lr){2-3}\cmidrule(lr){4-5}\cmidrule(lr){6-7}\cmidrule(lr){8-9}\cmidrule(lr){10-11}\cmidrule(lr){12-13} & \makecell{Enc\\(s/img)}                        & \makecell{Dec\\(s/img)}                                   & \makecell{Enc\\(s/img)}              & \makecell{Dec\\(s/img)}    & \makecell{Enc\\(s/img)}      & \makecell{Dec\\(s/img)}           & \makecell{Enc\\(s/img)} & \makecell{Dec\\(s/img)} & \makecell{Enc\\(s/img)} & \makecell{Dec\\(s/img)} & \makecell{Enc\\(s/img)} & \makecell{Dec\\(s/img)} \\
        \midrule                                                                                                          % ---------------- Cheng2020 block ----------------
        Cheng2020 (baseline)                                                                                             & 3.260                                          & 5.636                                                     & 11.733                               & 20.778                     & 18.409                       & 32.749                            & 1.418                   & 2.222                   & 5.092                   & 8.259                   & 7.941                   & 12.952                  \\
        Cheng2020 + TrANS                                                                                                & 3.219                                          & 3.495                                                     & 11.692                               & 12.800                     & 18.208                       & 20.033                            & 1.392                   & 2.007                   & 4.943                   & 7.764                   & 7.747                   & 12.337                  \\
        Cheng2020 + WF                                                                                                   & 0.152                                          & 0.372                                                     & 0.320                                & 0.794                      & 0.410                        & 0.941                             & 0.088                   & 0.151                   & 0.179                   & 0.341                   & 0.224                   & 0.416                   \\
        Cheng2020 + WF + TrANS                                                                                           & 0.138                                          & 0.148                                                     & 0.291                                & 0.311                      & 0.356                        & 0.373                             & 0.069                   & 0.088                   & 0.156                   & 0.230                   & 0.182                   & 0.267                   \\
        \midrule                                                                                                          % ---------------- MBT2018 block ----------------
        MBT2018 (baseline)                                                                                               & 3.260                                          & 5.656                                                     & 11.808                               & 20.754                     & 18.521                       & 32.503                            & 1.402                   & 2.269                   & 5.128                   & 8.282                   & 8.057                   & 12.899                  \\
        MBT2018 + TrANS                                                                                                  & 3.229                                          & 3.485                                                     & 11.818                               & 12.817                     & 18.409                       & 20.223                            & 1.363                   & 2.155                   & 5.010                   & 7.821                   & 7.761                   & 11.124                  \\
        MBT2018 + WF                                                                                                     & 0.151                                          & 0.386                                                     & 0.323                                & 0.827                      & 0.410                        & 0.993                             & 0.088                   & 0.151                   & 0.187                   & 0.346                   & 0.224                   & 0.398                   \\
        MBT2018 + WF + TrANS                                                                                             & 0.138                                          & 0.153                                                     & 0.281                                & 0.323                      & 0.344                        & 0.380                             & 0.076                   & 0.096                   & 0.157                   & 0.213                   & 0.183                   & 0.242                   \\
        \midrule                                                                                                          % ---------------- InvCompress block ----------------
        InvCompress (baseline)                                                                                           & 3.175                                          & 5.580                                                     & 11.817                               & 20.660                     & 18.739                       & 32.803                            & 1.398                   & 2.238                   & 5.204                   & 8.393                   & 8.053                   & 12.944                  \\
        InvCompress + TrANS                                                                                              & 3.138                                          & 3.414                                                     & 11.830                               & 12.876                     & 18.484                       & 20.116                            & 1.329                   & 2.026                   & 5.087                   & 7.777                   & 7.762                   & 12.818                  \\
        InvCompress + WF                                                                                                 & 0.174                                          & 0.405                                                     & 0.434                                & 0.893                      & 0.581                        & 1.107                             & 0.100                   & 0.174                   & 0.295                   & 0.456                   & 0.351                   & 0.534                   \\
        InvCompress + WF + TrANS                                                                                         & 0.164                                          & 0.178                                                     & 0.403                                & 0.415                      & 0.538                        & 0.539                             & 0.091                   & 0.119                   & 0.263                   & 0.334                   & 0.303                   & 0.394                   \\
        \bottomrule
    \end{tabular}%
    }
\end{table*}
  \section{Discussion}
  \label{sec:discussion}
  \subsection{Performance}
  As shown in Table~\ref{tab:runtime_bd} and Fig~\ref{fig:bdrate_time_kodak}, the proposed wavefront scheduling
    substantially reduces the inference time of spatial autoregressive codecs
    while preserving their RD performance. On Kodak, WF+TrANS achieves a
    26.7--35.7$\times$ speedup over the corresponding raster-scan baselines,
    and the gain further increases on larger images, reaching up to
    71.9$\times$ on CLIC2020-valid. The speedup is consistent across three
    different spatial-AR backbones, indicating that the proposed scheduling is
    not specific to a particular model architecture. 
    
    Table~\ref{tab:acceleration} provides a comparison between other acceleration methods. Checkerboard decoding\cite{heCheckerboardContextModel2021b} provides the shortest runtime, however, its architectural simplification degrades RD performance and
    requires retraining. The BD-rate increases from 5.31\% to
    8.51\% for Cheng2020\cite{cheng2020} and from 11.26\% to 16.71\% for MBT2018\cite{minnen2018}. In contrast,
    wavefront scheduling preserves the original BD-rate of the raster-scan
    models while reducing the decoding time from 5.64s to 0.148s for
    Cheng2020 and from 5.66s to 0.166s for MBT2018. Thus, although
    checkerboard remains faster in absolute runtime, wavefront provides a
    more favorable speed--RD trade-off for pre-trained spatial AR codecs by
    achieving deployable acceleration without retraining or sacrificing
    compression efficiency.

  To demonstrate that our methods are effective regardless of the hardware environment, we conducted experiments in other hardware settings together with detailed description.  \textbf{Setting 1. (Low)}An Intel Core i7-7700 CPU with RTX 2080 SUPER GPU is deployed. \textbf{Setting 2. (High)}An Intel Core Ultra 9 285K with RTX 5000 PRO Blackwell GPU is deployed. As shown in Table~\ref{tab:runtime_combined}, wavefront parallelization (WF) alone achieves up to 97\% reduction in decoding time compared to the raster-scan baseline (e.g., 5.636s$\to$0.148s for Cheng2020 on Kodak in Setting 1). When combined with Tensorized rANS, decoding speedup reaches approximately 38$\times$. Note that applying TrANS to the sequential raster-scan baseline also improves speed (5.636s$\to$3.495s,$\sim$1.6$\times$), but the gain is substantially smaller than in the WF setting. This observation confirms that once wavefront parallelization increases the degree of parallelism, the entropy
  coding becomes the dominant bottleneck, making TrANS essential for further
  acceleration.

  \subsection{Exactness and Reproducibility}
  
  In practice, changing the GPU execution order introduces minor floating-point
  rounding differences, and the wavefront-based rANS update order causes slight numerical
  variations. However, with our wavefront implementation of Cheng2020, the
  maximum relative error in PSNR and BPP compared to the raster baseline is less
  than 0.08\% across all Kodak images, confirming that the proposed scheduling
  preserves the original model's compression rate performance.
  \subsection{Ablation of implementations}
  \label{sec:ablation}
We conducted an ablation study to validate our optimizations (Table~\ref{tab:architecture_ablation}).
  First, to ensure a fair comparison, we applied Tensorized rANS to the standard
  raster-scan method. This yields a $2.49$s decode time; while faster ($\approx 2
  .4\times$) than the standard baseline, the serial bottleneck remains.

  In contrast, our proposed Wavefront method combined with Tensorized rANS achieves
  a decoding time of $0.14$s, representing a massive $\approx18\times$ speedup even
  over the \textit{optimized} raster baseline (and $>40\times$ over the standard
  baseline). Notably, the speedup gain from Tensorized rANS is more pronounced
  in the Wavefront setting ($\approx 3.3\times$, from $0.46$s to $0.14$s) compared
  to the Raster setting ($\approx 2.4\times$). This confirms our observation
  that once neural network inference is parallelized via Wavefront, the bottleneck
  shifts to entropy coding overhead. Ultimately, the combination of selective batching and Tensorized rANS yields the best performance. \looseness=-5
  \begin{table}[t]
\centering
\caption{Comparison of acceleration methods on Cheng2020\cite{cheng2020} and MBT2018\cite{minnen2018}.}
\label{tab:acceleration}
\begin{tabular}{llccc}
\toprule
Model & Method & Speed & BD-Rate & Training- \\
 & & (s) & (\%) & Free \\
\midrule
\multirow{3}{*}{MBT2018}
  & None(raster) & 5.66  & \textbf{11.26} &  \\
  & checkerboard & \textbf{0.016} & 16.71 &  \\
  & wavefront    & 0.166 & \textbf{11.26}  & \checkmark \\
\midrule
\multirow{3}{*}{Cheng2020}
  & None(raster) & 5.64  & \textbf{5.31} &  \\
  & checkerboard & \textbf{0.066} & 8.51 &  \\
  & wavefront    & 0.148 & \textbf{5.31} & \checkmark \\
\bottomrule
\end{tabular}
\end{table}

  % \subsection{Bottleneck analysis}
  % Although our wavefront schedule exposes parallelism, the achieved speedup depends
  % on the end-to-end bottleneck. Fig.~\ref{fig:bdrate_time_kodak} reports a runtime
  % breakdown for the raster baseline, exact wavefront, and wavefront+opt implementations.
  % We observe that the dominant cost is the context convolution (\texttt{context\_prediction}),
  % accounting for 60\%--70\% of the total runtime, while entropy-parameter
  % prediction and rANS coding contribute 15\% and 20\%, respectively. This
  % explains why further reductions in neighborhood materialization (e.g., full vs.\ selective
  % im2col) provide limited gains in our setting: the performance is primarily
  % bounded by the number of convolution invocations and kernel-launch overhead. Our
  % \texttt{WF+Opt} implementation improves throughput by 2.3\texttimes{} over
  % \texttt{WF} mainly by reducing host-side overhead (e.g., eliminating Python
  % list conversion in entropy coding) and improving batching efficiency.
  \begin{table}[t]
    \caption{Ablation of GPU-friendly implementation of Wavefront, evaluated with
    Cheng2020\cite{cheng2020} (quality = 3) over Kodak\cite{kodak} dataset.
    % Naive wavefront
    % uses a diagonal loop with per-pixel context. Batched wavefront groups pixels
    % via unfold-based batching. Selective wavefront reduces unfolded tensor
    % materialization by selective batching, and the "+Tensor rANS" version combines
    % that with Tensorilized rANS implementation.
    }
    \label{tab:architecture_ablation}
    \centering
    \begin{tabular}{lccc}
        \toprule Variant       & Enc (s)         & Dec (s)         & bpp / PSNR (dB)     \\
        \midrule Original Raster-Scan  & 2.811          & 5.937          & 0.269 / 31.32 \\
        + Tensorized rANS & 2.493 & 2.694 & 0.269 / 31.32 \\ 
        \midrule Loop-based WF & 0.1815          & 0.4676          & 0.269 / 31.33 \\
        \midrule Batched WF    & 0.1394          & 0.4512          & 0.269 / 31.33 \\
        + Tensorized rANS          & 0.1702          & 0.2096          & 0.269 / 31.33 \\
        \midrule Selective WF  & 0.1636          & 0.4576          & 0.269 / 31.33 \\
        + Tensorized rANS          & \textbf{0.1237} & \textbf{0.1376} & 0.269 / 31.33 \\
        \bottomrule
    \end{tabular}
    \vspace{-4mm}
\end{table}
  \subsection{Context approximate mode}
  % We further propose an optional context-approximation mode (GroupFill-$N$) that
  % increases parallelism by processing $N$ wavefronts jointly and deterministically
  % filling unavailable context values. This mode enables a single trained model
  % to operate at multiple speed points without retraining.
  We evaluated the effectiveness of context approximation (Sec.
  \ref{sec:context_approx}). As shown in Table~\ref{tab:approx_kodak_cheng2020},
  larger $N$ yields higher speedups at the cost of increased bitrate. Image
  quality is preserved except for minor quantization errors. When we measure the
  rate-distortion with all qualities Cheng2020 and Kodak, the BD-Rate for
  $N = \lbrace 1,2,3,4 \rbrace$ are $\lbrace0.00, 9.33, 14.16, 27.00\rbrace \%$,
  respectively. We view this as a practical speed--rate tradeoff for applications
  where throughput is prioritized over compression efficiency. 
  % %ref tab:quality
  % \input{tables/table_quality}
  \subsection{Limitations}
  Firstly, changing the processing order alters the bitstream layout. The
  compressed bitstream from our method is therefore not inter-decodable with the
  raster baseline, and vice versa. Nevertheless, model weights remain compatible
  with the proposed method applied on autoregressive models, and they do not
  have to be retrained.

  Secondly, our method targets spatial autoregressive models with local dependencies.
  It does not benefit Checkerboard-style models
  \cite{heCheckerboardContextModel2021b} or channel-wise autoregressive models
  \cite{minnen2020}. However, in case the model combines channel and spatial
  autoregressive context\cite{crosschannelcontext}, our wavefront methods is
  applicable and can potentially bring further enhancement.

  % However, recent models \cite{heELICEfficientLearned2022} use the combination of channel and spatial context, and as future work, the proposed method can potentially bring further enhancement to such methods.

  \section{Conclusion}
  \label{sec:conclusion}
  We presented a training-free acceleration of spatial autoregressive context
  models for learned image compression. By modeling the spatial context dependency
  as a DAG and deriving an optimal schedule in the form of a staggered wavefront
  order, we enable safe parallel execution without modifying model parameters. Combined
  with a GPU-friendly batched implementation, our method achieves more than
  $26\times$ speedup (or $13\times$ with wavefront alone) while preserving
  compression performance. We further introduced an optional context-approximation
  mode that offers variable-speed inference from a single model by trading bitrate
  for throughput.

  \bibliographystyle{IEEEbib}
  \bibliography{refs}

% Template for ICIP-2026 paper; to be used with:
%          spconf.sty  - ICASSP/ICIP LaTeX style file, and
%          IEEEbib.bst - IEEE bibliography style file.
% --------------------------------------------------------------------------
\appendices

  \section{Proof of the Validity and Optimality of Wavefront Parallelization}
  In this section, we provide a formal proof of the validity and optimality of
  the wavefront parallelization strategy in general case.
  % Preamble (already likely in your paper):
  % \usepackage{amsmath,amssymb,amsthm}
  % \newtheorem{theorem}{Theorem}
  \begin{definition}[Dependency set for raster-scan causal masked convolution] Let
  $\Omega=\{0,\dots,I-1\}\times\{0,\dots,J-1\}\subset\mathbb{Z}^{2}$, where
  $i\in\{0,\dots,I-1\}$ is the \emph{row} index and $j\in\{0,\dots,J-1\}$ is the \emph
  {column} index. Let $R\in\mathbb{N}$. A raster-scan $(2R{+}1)\times(2R{+}1)$ causal masked convolution uses as context: (i) all pixels in the $R$ rows above within horizontal offset at most$R$, and (ii) the $R$ pixels to the left in the same row. Equivalently, define the dependence vector set
  \[
    D_{R}\;:=\; (\{1,\dots,R\}\times\{-R,\dots,R\}) \;\cup\; (\{0\}\times\{1,\dots
    ,R\}).
  \]
  For each $x\in\Omega$, the dependency graph contains an edge $x-d\to x$ for every $d\in
  D_{R}$ such that $x-d\in\Omega$. \end{definition}
  \begin{definition}[Staggered wavefront schedule]
    % force newline
    Let $\lambda := R+1$. \emph{(Note: this $\lambda$ corresponds to the shear
    factor $\lambda$ in the main paper)} Fix $\pi=(\lambda,1)$ and define the rank
    function $t:\Omega\to\mathbb{Z}$ by
    \[
      t(i,j)=\pi\cdot(i,j)=\lambda i + j.
    \]
    The \emph{staggered wavefront schedule} executes nodes in increasing order
    of $t$, and executes all nodes with the same value of $t$ in one parallel step
    (i.e., level sets $\{(i,j)\in\Omega: t(i,j)=c\}$ form the parallel batches).
  \end{definition}
  \begin{theorem}[Validity and optimality of the staggered wavefront schedule]
    Consider the dependency set $D_{R}$ from Definition~1 and assume $J\ge R+1$.
    Let $\lambda:=R+1$ as in Definition~2. Then:
    \begin{enumerate}
      \item The staggered wavefront schedule is valid for $D_{R}$.

      \item In the unit-time, infinite-processor model, the staggered wavefront schedule
        is optimal, i.e., it achieves the minimum possible makespan
        \[
          T^{\*}\;=\;\lambda I + J - \lambda \;=\; (R+1)I + J - (R+1).
        \]
    \end{enumerate}
  \end{theorem}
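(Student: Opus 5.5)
The plan is to handle the two claims separately. Validity follows from Lamport's condition \eqref{eq:lamport_condition} checked directly on $D_R$. Optimality follows from a matching lower bound: a long dependency chain, every edge of which raises $t$ by exactly one.

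\emph{Validity.} It suffices to show $t(x-d)<t(x)$ for every edge $x-d\to x$. Since $t$ is linear, this is equivalent to $\pi\cdot d>0$ for all $d\in D_R$. I will split $D_R$ into its two parts.
\begin{itemize}
\item For $d=(a,b)$ with $1\le a\le R$ and $-R\le b\le R$, we get $\pi\cdot d=\lambda a+b\ge (R+1)\cdot 1-R=1$.
\item For $d=(0,b)$ with $1\le b\le R$, we get $\pi\cdot d=b\ge 1$.
\end{itemize}
Hence every predecessor lies in a strictly earlier level set. So executing level sets in increasing order of $t$ is a topological order, and nodes within one level set have no edges between them. I will also record that the bound $\pi\cdot d\ge 1$ is attained exactly by $d=(1,-R)$ and $d=(0,1)$; this is used below.

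\emph{Makespan of the schedule.} On $\Omega$, $t$ ranges over all integers from $t(0,0)=0$ to $t(I-1,J-1)=\lambda(I-1)+J-1$. The schedule therefore uses at most $\lambda I+J-\lambda$ parallel steps, and exactly that many if every intermediate level set is nonempty. Nonemptiness will follow from the chain constructed next.

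\emph{Lower bound.} In the unit-time, infinite-processor model, any valid schedule must give the nodes of a directed path pairwise distinct, strictly increasing step indices. So the makespan is at least the number of nodes on the longest path. I will build an explicit path using only the two tight vectors $(0,1)$ and $(1,-R)$. Each such edge raises $t$ by exactly one, so a path from $(0,0)$ to $(I-1,J-1)$ visits every value $0,1,\dots,\lambda(I-1)+J-1$, i.e.\ $\lambda I+J-\lambda$ nodes. The construction proceeds row by row:
\begin{itemize}
\item In row $0$, walk from $(0,0)$ to $(0,J-1)$ by $J-1$ steps of $(0,1)$.
\item Repeatedly take one step of $(1,-R)$, from $(i,J-1)$ to $(i+1,J-1-R)$, followed by $R$ steps of $(0,1)$ back to $(i+1,J-1)$.
\item Stop at $(I-1,J-1)$.
\end{itemize}
Each of these moves is an edge of the graph because the corresponding vector lies in $D_R$ and both endpoints lie in $\Omega$.

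The main obstacle is this boundary check: the diagonal move must stay inside the grid. It does precisely because the hypothesis $J\ge R+1$ gives $J-1-R\ge 0$. I would also remark that without this hypothesis the tight diagonal step is unavailable, and the formula must change.

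Combining the two bounds gives $T^{*}=\lambda I+J-\lambda=(R+1)I+J-(R+1)$, attained by the staggered wavefront schedule. This also shows that every level set is nonempty, since the chain meets each one.
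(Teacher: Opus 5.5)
Your proof is correct and follows essentially the same route as the paper. For validity you check Lamport's condition $\pi\cdot d\ge 1$ on the two parts of $D_R$; for the matching lower bound you use the same explicit critical path: along row $0$, then repeated $(1,-R)$ jumps each followed by $R$ rightward steps, with $J\ge R+1$ keeping the jumps inside $\Omega$. Your extra observation adds a little rigor over the paper: every edge on this path raises $t$ by exactly one, so every level set is nonempty and the schedule uses exactly $\lambda I+J-\lambda$ steps, whereas the paper simply equates the step count with $t_{\max}-t_{\min}+1$.
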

  \begin{proof}
    (1) \textbf{Validity.} For any $d\in D_{R}$, we have
    \[
      \pi\cdot d=
      \begin{cases}
        p,             & d=(0,p),\ p\in\{1,\dots,R\},                      \\
        \lambda r + q, & d=(r,q),\ r\in\{1,\dots,R\},\ q\in\{-R,\dots,R\}.
      \end{cases}
    \]
    In the first case, $\pi\cdot d=p\ge 1$. In the second case, the minimum is
    attained at $(r,q)=(1,-R)$, giving $\pi\cdot d \ge \lambda - R = (R+1)-R=1$.
    Hence $\pi\cdot d \ge 1>0$ for all $d\in D_{R}$. Therefore the condition of Lamport's
    hyperplane theorem holds for $t(x)=\pi\cdot x$, so executing level sets
    $\{x: t(x)=c\}$ in increasing $c$ is a valid wavefront schedule.

    (2) \textbf{Optimality.} Let $T^{\*}$ denote the optimal makespan.

    \emph{Upper bound.} Over $\Omega$, $t_{\min}=t(0,0)=0$ and $t_{\max}=t(I-1,J-
    1)=\lambda(I-1)+(J-1)$. Thus the staggered wavefront schedule completes in
    \[
      T_{\mathrm{wf}}=(t_{\max}-t_{\min})+1 = \lambda(I-1)+J = \lambda I + J - \lambda
    \]
    steps, implying $T^{\*}\le T_{\mathrm{wf}}$.

    \emph{Lower bound.} Any schedule must take at least the length of the
    longest directed path (critical path) in the dependency DAG. We construct a directed
    path of length $T_{\mathrm{wf}}$. First traverse the top row using
    $(0,1)\in D_{R}$:
    \[
      (0,0)\to(0,1)\to\cdots\to(0,J-1)\quad (\text{$J$ nodes}).
    \]
    Then, for each $i=0,\dots,I-2$, use $(1,-R)\in D_{R}$ to jump
    \[
      (i,J-1)\to(i+1,J-1-R),
    \]
    and move right by $R$ steps within row $i+1$ using $(0,1)\in D_{R}$ to reach
    $(i+1,J-1)$. Each row transition adds $(R+1)=\lambda$ nodes, yielding a path
    of length
    \[
      J+\lambda(I-1)=\lambda I + J - \lambda = T_{\mathrm{wf}}.
    \]
    Hence $T^{\*}\ge T_{\mathrm{wf}}$. Combining with the upper bound gives
    $T^{\*}=T_{\mathrm{wf}}$.
  \end{proof}

  \vfill
  \pagebreak

  \section{Torch-like Pseudocode for Wavefront Implementation}% --- V1 Code ---
  \begin{lstlisting}[caption={Wavefront Decoding with Naive Loop-Based Processing}, label={lst:v1}]
params = hyper_network(bitstream_hyper)
# means and scales for each pixel from hyper-prior
# of shape (N, 2*C, H, W)
y_hat = torch.zeros(N, C, H, W)

# Pixel-wise processing
for h, w in get_diag_coords(H, W):
    # 1. Individual crop (Fragmented access)
    y_crop = y_hat[:, :, h:h+5, w:w+5]
    
    # 2. Compute & Update
    ctx = F.conv2d(y_crop, weight, bias=b)
    y_val = decode(params[..., h, w], ctx)
    y_hat[:, :, h+2, w+2] = y_val
\end{lstlisting}

  % --- V2 Code ---
  \begin{lstlisting}[caption={Wavefront Decoding with Unfold-based Batching}, label={lst:v2}]
params = hyper_network(bitstream_hyper)
params_flat = params.view(N, C, -1)
y_hat = torch.zeros(N, C, H, W)
# flat view must be taken AFTER padding/zeroing to point to correct memory
y_hat_flat = y_hat.view(N, C, -1) 

# Batching with F.unfold
for batch_indices in get_diag_indices(H, W):
    # 1. Unfold entire image (High Memory Cost)
    #    Inefficient: Extracts patches for ALL pixels every time
    patches_all = F.unfold(y_hat, 5) 
    patches = patches_all[..., batch_indices]
    
    # 2. Batch Compute
    ctx = F.conv2d(patches, weight, bias=b)
    
    # 3. Gather Params for current batch (Fixed)
    #    Must slice params to match ctx dimensions
    p_batch = params_flat[..., batch_indices].unsqueeze(-1).unsqueeze(-1)
    
    # 4. Decode & Update
    y_vals = decode(p_batch, ctx)
    y_hat_flat.scatter_(1, batch_indices, y_vals)
\end{lstlisting}
  %move to right column
  \vfill
  % --- V3 Code ---
  \begin{lstlisting}[caption={Wavefront Decoding with Selective Im2Col}, label={lst:v3}]
params = hyper_network(bitstream_hyper)
params_flat = params.view(N, C, -1)
y_hat_flat = torch.zeros(N, C, H*W) 
# Selective Im2Col
offset = get_kernel_offsets(kernel_size=5)
for centers in get_diag_centers(H, W):
    # 1. Gather Patches (Memory Efficient)
    #    Calculate flat indices for 5x5 neighbors
    idx = centers.unsqueeze(1) + offset
    patches = y_hat_flat[:, idx].view(N, C, 5, 5)

    # 2. Compute Context from Patches
    ctx = F.conv2d(patches, masked_weight, bias=b)

    # 3. Fuse with Hyper-priors (params)
    #    Gather params at the same center positions
    p = params_flat[:, centers].unsqueeze(-1).unsqueeze(-1)

    y_vals = decode(p, ctx)

    # 5. Scatter Update
    y_hat_flat.scatter_(1, centers, y_vals)
y_hat = y_hat_flat.view(N, C, H, W)
\end{lstlisting}

  \section{Settings of VVC-intra}

  We evaluated VVC reference coding using VTM-23.0. We found that using standard
  CompressAI's benchmark results in inferior performance due to color space
  settings, therefore we conducted experiments directly with VTM-23.0 and ffmpeg.

  The encoder was configured with the default configuration file. Input images
  were 8-bit RGB. Prior to encoding, each image was converted to 8-bit YUV444
  raw video using ffmpeg. The reconstructed YUV was converted back to RGB (ffmpeg),
  and PSNR was computed in the RGB domain against the original image. QP were set
  to 22, 27, 32, 37, 42 and 47.

  \begin{enumerate}
    \item \textbf{RGB $\rightarrow$ YUV444 conversion (ffmpeg)}
      {\small \begin{verbatim}
ffmpeg -y -i input.png -pix_fmt 
yuv444p -f rawvideo input.yuv
\end{verbatim} }

    \item \textbf{Encoding with VTM (EncoderApp)}
      {\small \begin{verbatim}
EncoderApp -c encoder_intra_vtm.cfg \\
 -i input.yuv -b output.bin \\
  -o recon.yuv -f 1 -fr 2 \\
  -wdt W -hgt H  -q QP \\
  --InputBitDepth=8 \\
  --InputChromaFormat=444 \\
  --OutputBitDepth=8 \\
  --OutputBitDepthC=8 \\
  --ConformanceWindowMode=1
\end{verbatim} }

    \item \textbf{YUV $\rightarrow$ RGB reconstruction (ffmpeg)}
      {\small \begin{verbatim}
ffmpeg -y -f rawvideo -pix_fmt yuv444p  
-s W*H -i recon.yuv recon.png
\end{verbatim} }
  \end{enumerate}
\end{document}